\def\BibTeX{{\rm B\kern-.05em{\sc i\kern-.025em b}\kern-.08em
		T\kern-.1667em\lower.7ex\hbox{E}\kern-.125emX}}
\crefname{section}{§}{§§}
\newtheorem{theorem}{Theorem}
\begin{document}

\title{slimIoT: Scalable Lightweight Attestation Protocol For the Internet of Things
}

\author{\IEEEauthorblockN{Mahmoud Ammar}
	\IEEEauthorblockA{imec-DistriNet, KU Leuven \\
		mahmoud.ammar@cs.kuleuven.be}
	\and
	\IEEEauthorblockN{Mahdi Washha}
	\IEEEauthorblockA{IRIT, Toulouse University \\
		mahdi.washha@irit.fr}
	\and
	\IEEEauthorblockN{Gowri Sankar Ramachandran}
	\IEEEauthorblockA{University of Southern California \\
		gsramach@usc.edu}\\
	\and
		\IEEEauthorblockN{Bruno Crispo}
	\IEEEauthorblockA{imec-DistriNet, KU Leuven \\
		University of Trento, Italy\\
		bruno.crispo@unitn.it}
	
}

\IEEEoverridecommandlockouts
\IEEEpubid{\makebox[\columnwidth]{978-1-5386-5790-4/18/\$31.00~\copyright2018 IEEE \hfill} \hspace{\columnsep}\makebox[\columnwidth]{ }}	

\maketitle              
\IEEEpubidadjcol
\begin{abstract}
The Internet of Things (IoT) is increasingly intertwined with critical industrial processes, yet contemporary IoT devices offer limited security features, creating a large new attack surface. Remote attestation is a well-known technique to detect cyber threats by remotely verifying the internal state of a networked embedded device through a trusted entity. 
Multi-device attestation has received little attention although current single-device approaches show limited scalability in  IoT applications. Though recent work has yielded some proposals for scalable attestation, several aspects remain unexplored, and thus more research is required. 

This paper presents slimIoT, a scalable lightweight  attestation protocol that is suitable for all IoT devices. slimIoT depends on an efficient broadcast authentication  scheme along with symmetric key cryptography. It is resilient against a strong adversary with physical access to the IoT device. Our protocol is informative in the sense that it identifies the precise status of every device in the network. We implement and evaluate slimIoT considering many factors. On the one hand, our evaluation results show a low overhead in terms of memory footprint and runtime. On the other hand, simulations demonstrate that slimIoT is scalable, robust and highly efficient to be used in static and dynamic networks consisting of thousands of heterogenous IoT devices.

\end{abstract}

\begin{IEEEkeywords}
	IoT security, swarm attestation, scalability.
\end{IEEEkeywords}

\section{Introduction}
\label{sec:introduction}
  
The IoT envisions a future where billions of Internet-connected devices are deployed in our environment to support novel cyber-physical applications. Contemporary IoT networks are large and growing in the scale of deployment such as smart buildings and smart cities. IoT applications are naturally distributed and often embedded in numerous heterogeneous computing devices deployed over wide geographical areas. Despite their specialized nature in terms of limited resources and computing power, these devices are becoming attractive targets for a wide variety of cyber attacks \cite{stuxnet} with potentially very dangerous consequences as they process privacy-sensitive information and perform safety-critical tasks that may endanger the lives of many people. 
Therefore, security measures that detect and mitigate cybersecurity threats should take place. 

Remote Attestation (RA) is a common detection and mitigation technique for exposing the misbehaviour of a compromised IoT device, where a trusted entity, denoted as a \textit{verifier}, checks the integrity of the internal state of an untrusted remote device, denoted as a \textit{prover}. 
Most research to date has focused on attesting a single $prover$ device~\cite{h2010bootstrapping,spab,smart,smarm}, whereas relatively little attention has been paid to the scalability issue, even though real IoT applications deploy devices in massive numbers forming large mesh networks or swarms where physical access to any device is easily reachable.


The recently proposed attestation schemes \cite{seda2015,sana2016,darpa2016,lisa2017,scapi2017,salad,wise} have addressed the attestation  problem at a large scale where a significant number of devices have to be attested efficiently and securely. 
To narrow the focus, only three approaches, namely DARPA \cite{darpa2016}, SCAPI \cite{scapi2017}, and SALAD \cite{salad}, have considered both remote and physical attacks. The latter is expensive in terms of using public key cryptography which is not suitable for low-end embedded devices. SCAPI is a scalable attestation protocol that handled the shortcomings of DARPA. In spite of the strong security offered, SCAPI still depends on some hard assumptions (e.g. requires half of the devices in a network to be uncompromised) and incurs high overhead in terms of memory footprint and power consumption due to neighboring discovery and exchanging many channel keys. 

In this paper, we present \textit{slimIoT}, a scalable lightweight attestation protocol that can be implemented efficiently also on common resource-constrained IoT devices (e.g. IETF Class-1 \cite{RFC7228}) with strong security guarantees. slimIoT classifies the connected devices in the swarm into clusters where all clusters are periodically attested against physical attacks under the assumption that a physical attack is a time-consuming activity \cite{abdec2008,darpa2016}, and one or more clusters are attested against remote attacks every attestation period. 
Our protocol relies on authenticated parameterized broadcast messages using symmetric key cryptography. Authenticated broadcast techniques require an asymmetric encryption to provide strong security guarantees. Otherwise, any compromised device can render the entire swarm insecure. We achieve asymmetry through a delayed disclosure of symmetric keys generated by using a one-way hash-chain mechanism at the \textit{verifier} side. 
In line with a similar technique for authentication \cite{spins}, slimIoT requires that all devices are loosely time synchronized. To detect physical attacks, 
we take advantages of the $nonce$ value used to avoid replay attacks. All $nonces$ in various attestation periods are updated securely in a linked way, where missing one update prevents the corresponding $prover$ from teaming up with the swarm anymore as it is most likely physically-compromised due to its absence for a reasonable amount of time. 

We show that slimIoT is secure, scalable, robust, and runs efficiently on mesh-networked low-end embedded devices. Scalability and robustness are demonstrated by simulating large mesh networks using OMNeT++ framework \cite{omnet}.


\textbf{Paper outline.} The remainder of this paper is organized as follows. Section~\ref{sec:relatedwork} reviews the related work. Preliminaries are presented in Section~\ref{sec:preliminaries}. Section~\ref{sec:slimiot} describes slimIoT in details.  Implementation details and evaluations are reported in Section \ref{sec:implementation} and \ref{sec:evaluation} respectively. Section~\ref{sec:conclusions} concludes.

\section{Related Work}
\label{sec:relatedwork}

Prior work in single-prover RA can be divided into three approaches: hardware-based~\cite{h2010bootstrapping}, software-based~\cite{spab}, and hybrid~\cite{smart,smarm}. Each of these approaches has pros and cons. Nevertheless, all of them target a single-device attestation. Accordingly, they are not efficient for collective attestation. 


Recently, few research papers have been published to address the problem of swarm attestation. Depending on public key cryptography, SEDA~\cite{seda2015} lets each device in the network attests its neighbours and propagates the aggregated attestation reports back to its parent till eventually received by the \textit{verifier}. SANA~\cite{sana2016} enhanced SEDA by providing an efficient and scalable attestation technique based on a novel signature scheme that enables anyone to publicly verify the attestation reports in a very efficient way. The authors in~\cite{lisa2017}  have presented two different swarm attestation protocols, called \textit{{LISA}$\alpha$} and \textit{{LISA}s} using symmetric key cryptography and depending on SMART architecture~\cite{smart}. WISE \cite{wise} is the first smart swarm attestation protocol that minimizes the communication overhead by attesting some devices only (instead of the entire swarm) while preserving an adjustable level of security.

All the aforementioned swarm attestation proposals have assumed software-only adversary model. In mesh networks this assumption may not hold as it is easy for an adversary to capture a device and physically tamper with it. SALAD \cite{salad} targets highly dynamic networks in a distributed manner depending on public key cryptography, with limited scalability in networks containing IoT devices with small memory footprint to exchange and store a swarm-wide attestation report.

DARPA~\cite{darpa2016} mitigates invasive physical attacks by combining the current scalable attestation protocols, e.g. SEDA~\cite{seda2015}, with an absence detection protocol ~\cite{abdec2008} under the  assumption that the physical attack consumes an amount of time, whose lower bound is known a priori, during which, the compromised prover device is offline. 

SCAPI~\cite{scapi2017} has enhanced DARPA by proposing a better way to detect physical attacks, where all devices in the swarm share two common session keys that are periodically updated by a leader device. Building on the assumption that the physical tampering with the IoT device requires turning it offline for a considerable amount of time, SCAPI detects physically-compromised devices by periodically updating the group-wide secrets in a way that all offline devices cannot get this update and as a consequence they cannot team up with the swarm once they are online again. Both session-key update and attestation phases in SCAPI rely on mutually authenticating neighboring devices by exchanging channel keys used for encrypting all messages exchanged.
In nutshell, we show that slimIoT overcomes the limitations of SCAPI as follows:
\begin{itemize}
	
	\item It precisely attests the devices in the network and identifies the ones that run a compromised software or have been physically manipulated, assuming that there is only one device within the range of the $verifier$ uncompromised. This relaxes the assumption of SCAPI that requires at least half of the devices to be healthy.
	\item It is more efficient for highly dynamic and heterogenous networks where devices do not have to know about their neighbours or exchange any channel key. Accordingly, slimIoT requires less memory footprint where the space needed to store cryptographic keys is constant in all devices in the swarm. 
	\item It is more robust to false positives i.e., healthy devices that are regarded as physically-compromised, as we explain later on. 
	   
\end{itemize} 



\section{Preliminaries}
\label{sec:preliminaries}

\textbf{Network and Device Requirements.} We consider that  devices in a swarm $S$ can have various software configurations and possess heterogenous hardware capabilities. However, we assume that each device $D_i$ in $S$ satisfies the minimum hardware properties required for secure RA~\cite{min2014} which are ROM and memory protection unit (MPU). The ROM is needed to store the attestation routine and cryptographic keys, whereas the MPU is required to enforce access control over secret data. The swarm topology can be either static or dynamic. However, we assume that devices stay connected to the network during motion and there is at least one device that is physically uncompromised and reachable by the $verifier$. Furthermore, we assume that all devices are loosely timed synchronized with the $verifier$. 
Henceforth, we refer to the $verifier$ as $\upsilon$ and the $prover$ ($D_i$) as $\rho$.
\newline
\textbf{Adversary Model.} We consider a strong adversary $adv$, who has the ability to perform remote and invasive physical attacks. We assume that the remote $adv$ has full access to the network and can either perform passive (e.g. eavesdrop on communication, etc.) or active (e.g. inject a malware, etc.) attacks. Additionally, the physical attacker is able to physically compromise any device $D_i$ (except one) in $S$, turn it off, and perform any operation over it (e.g. learn all stored secrets). Similar to all prior work \cite{darpa2016,scapi2017}, we build on the assumption that a physical attack requires at least an amount of time that is known a priori, denoted as  $T_{adv}$. We rule out DoS and non-invasive physical attacks (e.g. side channels). 
\newline
\textbf{Attestation Considerations.} We assume that $\upsilon$  is a powerful device (e.g. Raspberry pi or higher) and is unaware of the current network topology of the swarm due to it's dynamicity. If the IoT device lacks a Trusted Execution Environment (TEE) (e.g. Class-1 devices \cite{RFC7228}), we assume that the attestation code should be correct, atomic, leak no sensitive information, and leave no traces in the RAM after execution (for example, by employing the security MicroVisor \cite{smv}). 

\section{slimIoT: Protocol Description}
\label{sec:slimiot}

\textbf{Overview.} slimIoT is a scalable attestation protocol that aims to efficiently attest multiple devices connected in a mesh network and detect the compromised ones. It consists of two different phases. The first phase (\cref{init}) takes place once, before the deployment, where all devices are initialized by a trusted party, $\upsilon$, with some public write-protected and private data. Upon deployment, all devices in the swarm are loosely-time synchronized with $\upsilon$ using a write-protected real time clock. In the second phase (\cref{attestphase}), $\upsilon$ periodically performs the attestation routine  in order to learn the precise status of every $\rho$ in $S$. The attestation phase is very flexible and can be performed in three ways: (i) identifying only physically-attacked devices through absence detection (physical attestation), (ii) attesting all devices against physical attacks and verifying the software integrity of some devices (clusters) against remote attacks (partial attestation), or (iii) attesting the entire swarm against both physical and remote attacks (full attestation). 

\subsection{Initialization Phase}
\label{init}
\subsubsection{Verifier Setup}
First, $\upsilon$ produces a sequence of secret keys  (one-way keychain) of length $j$ by choosing the last $K_j$ randomly, and generating the remaining values by successively applying a one-way hash function $F$ (e.g. SHA-256) where $K_{j-1}$ = $F(K_j)$. During the deployment, devices are initialized with $K_0$ as a commitment secret value. Using it, they can authenticate all other keys in the keychain by recursively executing $F$ (e.g. $K_0$ = $F(F(F(K_3)))$). However, they cannot compute any of these keys due to the one-way property of $F$.  The remaining keys are used as session keys in sequence, starting with $K_1$ and ending up with $K_j$, to authenticate packets exchanged in the attestation phase. To authenticate successive session keys quickly, the last authenticated session key is always saved at $\rho$ side (it does not replace $K_0$ as it is not secret and always exchanged in plaintext). Whenever $\upsilon$ needs to update the keychain, $K_0$ can be easily updated to the correct value by broadcasting a message containing the new value of $K_0$ and encrypted using the old one. 

Second,  $\upsilon$ divides the time into discrete time intervals, called \textit{epochs}, where the maximum duration of one \textit{epoch}, $T_{att}$, should not last longer than the overall physical attack time, $T_{adv}$. The value of both $T_{att}$ and $T_{adv}$ is application-dependent. The attestation phase is performed periodically at the beginning of every \textit{epoch}. Each \textit{epoch} is divided into multiple unequal discrete time sub-intervals. For the sake of clarity, in our protocol, we assume 4 non-overlapping time sub-intervals per \textit{epoch}. $\upsilon$ associates each session key of the one-way keychain with one time sub-interval as a MAC key to authenticate all packets sent in that period. Upon the expiration of the corresponding time sub-interval, $\upsilon$  exposes the associated session key after a delay $d$. The disclosure delay, $d$, is either a secret variable value that can be adjusted in every new attestation request or a fixed value  tied to the time intervals and broadcasted periodically in a special packet.  In the explanation later and in Figure 1, we deliberately abstract from any implementation details related to  time-related functions due to the limited space.
\subsubsection{Prover Initialization} 
Each $\rho$ is initialized with some public and private values. The public values are (i) its Id ($D_i$), (ii) the Id of the parent node in $S$ (ParID), and (iii) the Id of the cluster (ClusID), where $\rho$ belongs to, in $S$. The value of ParID is initially null and is updated only by the attestation code after forming the spanning tree during attestation. Devices are classified by $\upsilon$ into different clusters where each cluster has a ClusID. This classification is protocol-independent and can be based on various factors (e.g. geographical locations in static networks, common tasks among devices, network traffic distribution, equal clusters, and etc.). 

On the other hand, each $\rho$ is initialized with (i) two device-dependent secret keys for authentication ($K_a$) and software integrity ($K_t$) purposes, (ii) one cluster-dependent secret value ($K_c$) for multicasting and attestation purposes, and (iii) one group-wide secret key ($K_0$), which is the last key in the keychain and used for authenticating the session keys exchanged at the swarm level. Furthermore, every $\rho$ holds a group-wide secret value, $nonce$, that serves as a refreshment value to avoid replay attacks. This value is securely updated  twice in every attestation period. The updates are linked in a way to detect physical attacks, where missing one update prevents the corresponding device from authenticating future messages. Also, $\rho$ stores securely the correct MAC digest of the safe state of its memory, computed using $K_t$, and denoted as $H_S$. 

\begin{figure*}[t!]
	\centering
	\includegraphics[width= \textwidth, height = .7\textwidth]{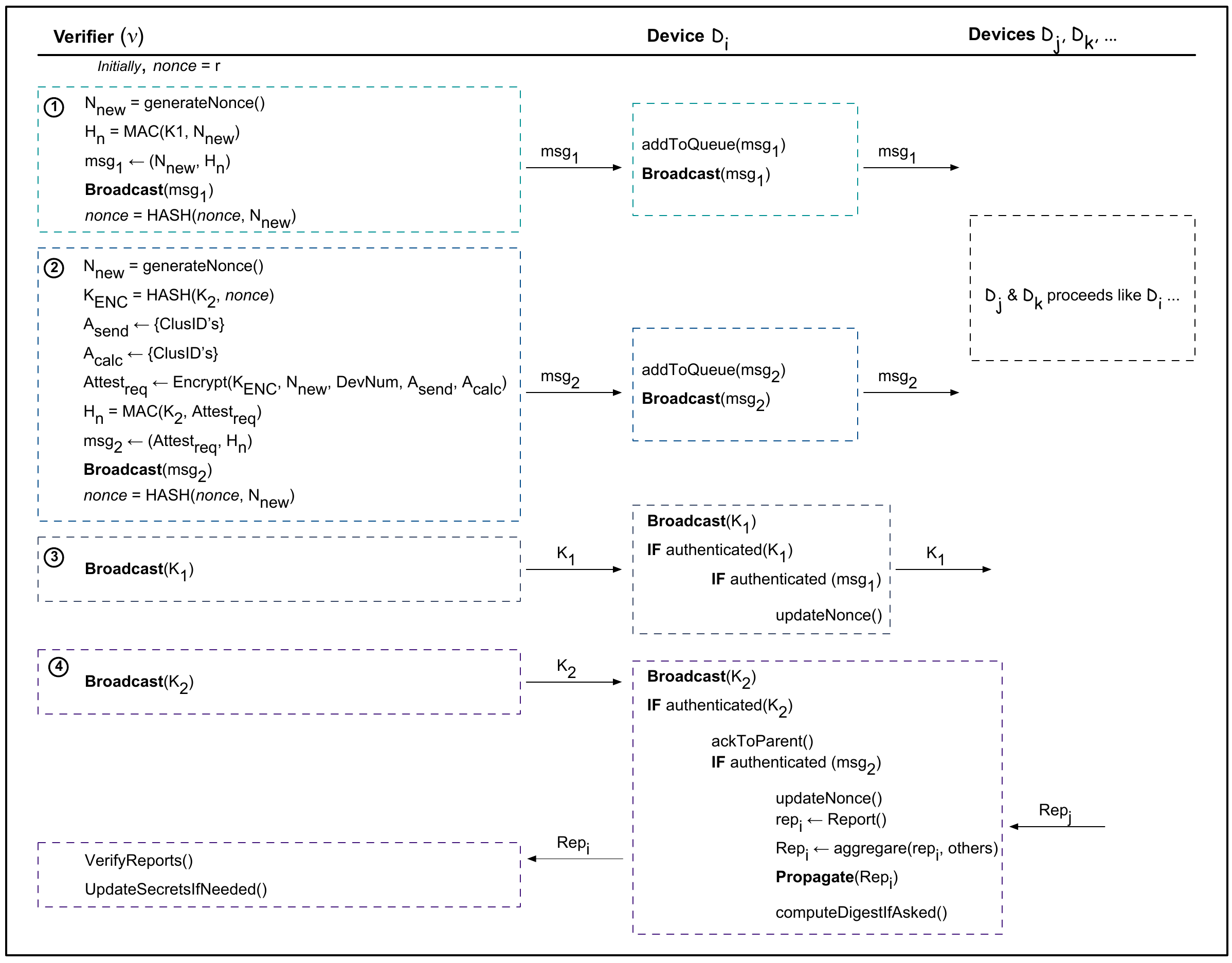}
	\caption{An overview of the working mechanism of slimIoT}
	\label{fig:figure1}
\end{figure*}

\subsection{Attestation Phase}
\label{attestphase}

\textbf{Initiation of Remote Attestation.}  
At the beginning of a new \textit{epoch}  (see \textcircled{1} in Figure \ref{fig:figure1}), $\upsilon$ generates a new random value, $Nnew$,  and computes the MAC of this value with a key that is secret at that point in time, where this key belongs to the keychain generated at the initialization phase. Considering that $\upsilon$ and $\rho$'s are loosely time synchronized, and each node knows the upper bound of the maximum synchronization error, $\upsilon$ broadcasts the new value along with its MAC to all devices within range without revealing the secret key. $\upsilon$ then updates the value of the secret $nonce$ by computing the hash value of the current one concatenated with the random generated value (e.g. $nonce = HASH(nonce||Nnew$). All receiving nodes accept this packet if it complies with their time schedule, store it in a buffer for authentication later on, and re-broadcast it again to their neighboring nodes. 

Upon the start of the next time sub-interval in the current \textit{epoch} (e.g. a few milliseconds later in our implementation) (see \textcircled{2} in Figure \ref{fig:figure1}), $\upsilon$ creates an attestation request ($Attest_{req}$) consisting of (i) a new random value ($Nnew$) used to update the $nonce$ one more time, (ii) the number of devices in $S$ ($DevNum$) used to notify $\rho$'s about the length of the $n-$bit vector that has to be created for acknowledging the physical presence where 1 at position $i$ indicates that $D_i$ is physically present in $S$ and uncompromised, and (iii) two lists of different cluster ID's where $\rho$'s that belong to any of the clusters in the first list ($A_{send}$) have to send an attestation response reflecting the state of their software too, whereas $\rho$'s  that belong to clusters mentioned in the second list ($A_{calc}$) have to compute the MAC digest of the existing software in their memories using $K_t$ as a last step of this attestation routine in order to produce the attestation report quickly in the next attestation period when asked. The calculation of MAC digest comes after propagating the aggregated attestation reports and thus the time consumed is not counted in the total time consumed in attesting the entire swarm (e.g. see \textcircled{4} in Figure \ref{fig:figure1}). In any attestation period, either of both lists can be empty if $\upsilon$ is interested only in attesting the physical presence of devices in $S$. 

$Attest_{req}$ is encrypted using a secret key that is derived from computing the hash value of the next session key in the keychain along with the previously updated $nonce$ ($K_{ENC}$ = $HASH(K_2, nonce)$). $\upsilon$ then computes the MAC of the encrypted $Attest_{req}$ using the same session key ($K_2$) and broadcasts it to all nearby devices. At $\rho$ side, every node stores this request in a buffer if it complies with its time schedule, and re-broadcasts it again in order to be received by all nodes in $S$. As a last step in the current time sub-interval, $\upsilon$ updates the $nonce$ for a second time. Please note that even after revealing $K_2$, $\rho$'s will not be able to derive $K_{ENC}$ if they did not manage to authenticate the previous $Nonce_{Update}$ request and update the $nonce$ successfully.
\newline
\textbf{Key Disclosure.} At the time of key disclosure, $\upsilon$ broadcasts the verification key to all $\rho$'s in vicinity. If this key is received according to the time schedule, the receiving $\rho$ re-broadcasts it to other neighbors. If the authentication process of this key is passed successfully, $\rho$ uses it to authenticate the corresponding packet in the buffer. Upon the successful authentication of any packet, $\rho$ executes the request of that packet. For example, in the third time sub-interval (See \textcircled{3} in Figure \ref{fig:figure1}), $\rho$ performs the $Nonce_{Update}$ request. Using the true updated $nonce$ along with the exposed session key in the fourth interval, $\rho$ is able to derive the decryption key, $K_{ENC}$, and decrypt $Attest_{req}$. 
Every $\rho$ acknowledges about receiving the second session key to the sender device (the parent) and thus the spanning tree is formed and rooted by $\upsilon$. 
\newline
\textbf{Attestation Report.} 
An $n$-bit vector is created by the leaf nodes in the spanning tree where $n$ equals to the number of devices in $S$ mentioned in $Attest_{req}$. Considering that the largest $prover$ ID does not exceed $DevNum$ in $S$, every active $D_i$, during the ongoing attestation, has to set the $i^{\text{th}}$ position in the $n$-bit vector to 1 to avoid being labeled as physically-compromised. 
If any $\rho$ received more than one propagated $n$-bit vector, it aggregates all of them by performing an OR operation.  If any $\rho$ is asked to provide an attestation report verifying its software integrity (by having its cluster ID in the $A_{send}$ list),  it checks the computed MAC digest ($H'_S$) upon the exit of the previous attestation period and compares it with the stored one ($H_S$). If $H'_S$ equals to $H_S$, $\rho$ creates an attestation report that constitutes of its $ID$ and an attest value that equals to the hash value of the computed MAC digest along with the last updated $nonce$ (e.g. attest = $HASH(H'_S || nonce)$). After setting the $i^{\text{th}}$ position in the $n$-bit vector to 1, $\rho$ propagates both the (aggregated) attestation report(s) and the $n$-bit vector to the parent device.  A secure aggregate of multiple attestation reports is computed by XOR-ing all attest values \cite{mac2008} in these reports and maintaining the participating devices IDs in an accompanying list. 
If  $H'_S$ and $H_S$ are not equal, $\rho$ does not create an attestation report of its status as it is most likely remotely-compromised but it still confirms its presence physically by assigning a one value to the $i^{\text{th}}$ position in the $n$-bit vector. 
As a last step after propagating the aggregate of the attestation reports, 
every $\rho$ checks if its cluster ID is included in $A_{calc}$ in $Attest_{req}$. If so, it calculates the MAC digest of its memory using $K_t$, and stores it in an \textit{unprotected} \footnote{It reflects the existence of a malware if it is altered before being used to create the attestation report in the next attestation iteration.} memory area. Having this value accelerates the creation of the attestation report in the next period, thus reducing the runtime overhead of slimIoT as computing MAC is time-consuming activity.  
All communications between $\rho$'s are authenticated using $HASH(K_0 || nonce)$. 
\newline
\textbf{Physical and Remote Attacks Detection.} 
Remotely-compromised devices are detected by being unable to participate in the aggregate of the attestation reports (their ID's are not present!), whereas physically-compromised devices are detected through their absence during the execution of the protocol (zero values present at positions with indexes equal to their ID's in the $n$-bit vector). Upon detection, security measures can take place to recover compromised devices (e.g. secure erasure \cite {speed}). 
\newline
In case of detecting a physical attack, there is a probability that the attacker can decrypt all group-wide secret communications if she was eavesdropping on the communication during  the execution of the attestation process, as explained in \cref{sa}. Therefore, to avoid this vulnerability, $\upsilon$ updates all group-wide secrets (e.g. $nonce$ and $K_0$) by broadcasting a number of messages equal to the number of healthy clusters. Each message contains the new values of secrets and is encrypted using one of the healthy cluster-wide secrets, $K_c$. For operative healthy devices in the compromised cluster, $\upsilon$ creates a message containing new cluster-wide secret value ($K_c$),  encrypts this message with each of the individual healthy devices keys, $K_a$,  appends all these ciphertexts to each other in one big message, and then broadcasts it. After that, $\upsilon$ broadcasts another message containing the new group-wide secrets and encrypted using the updated cluster wide secret key $K_c$. 

\subsection{Security Analysis}
\label{sa}

The goal of a secure attestation scheme is to distinguish between healthy and compromised devices in a swarm {$S$}, where limited false positive cases are accepted but not vice versa. This is formalized by the following adversarial experiment $ATT_{adv}^{n, c}(j)$, where $adv$ interacts with $n$ devices in addition to $\upsilon$, and compromises up to $c$ devices in $S$, 
where $c$ $\leq$ $n -1$. Considering that $adv$ is computationally bounded to the capabilities of the devices deployed in $S$, $adv$ interacts with the devices a polynomial  number of times $j$, where $j$ is a security parameter.  $\upsilon$ verifies the attestation aggregates received from attesting $S$ and outputs a decision as 0 or 1. Denoting the decision made  as $A$, $A$ = 1 means that the attestation routine is finished successfully and all devices are labeled correctly, or $A$ = 0 otherwise. 
According the definition of secure swarm attestation given by \cite{seda2015}, we summarize the security of slimIoT with an informal proof sketch (due to the limited space).

\begin{theorem}[Security of slimIoT]
	slimIoT is a secure scalable attestation protocol if $Pr$[$A$ = 1 | $ATT_{adv}^{n, c}(j)$ == $A$] is negligible for 0 $<$ $c$ $<$ $n$, under the following conditions:
	\begin{itemize}
		\item $T_{att}$ $\leq$ $T_{adv}$. 
		\item The PRNG, MAC, and aggregation schemes used are secure and  selective forgery resistant.
	\end{itemize}
\end{theorem}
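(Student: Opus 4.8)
The plan is to argue by reduction: I would show that any adversary $adv$ that forces $\upsilon$ to output $A=1$ in the experiment $ATT_{adv}^{n,c}(j)$ — i.e.\ to accept the swarm as correctly labeled while at least one of the $c$ compromised devices evades its true label — can be converted into an attacker against one of the three primitives named in the hypothesis, or else must violate the timing bound $T_{att}\le T_{adv}$. Since $adv$ is confined to a polynomial number $j$ of interactions, a union bound over the rounds and over the finitely many primitives then yields a negligible total advantage. Before starting I would pin down the winning condition precisely: $adv$ wins iff some physically-compromised $\rho$ is recorded as present (its bit set in the $n$-bit vector), or some remotely-compromised $\rho$ is counted in the attestation aggregate with a correct attest value despite $H'_S\neq H_S$.

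First I would dispatch the remote-attack case. A remotely-compromised $\rho$ escapes detection only by contributing a valid $attest=HASH(H'_S||nonce)$ built from the healthy digest $H_S$ it no longer stores, or by letting $adv$ splice such a value into the XOR-aggregate together with $\rho$'s ID. The first event is a selective forgery against the MAC/hash that binds the memory digest to the secret $nonce$; the second is a forgery against the aggregation scheme of \cite{mac2008}. Both are ruled out up to negligible probability by the second hypothesis, provided the current $nonce$ is unpredictable to $adv$ — which I reduce to the security of the PRNG drawing each $Nnew$ and the one-wayness of the update $nonce=HASH(nonce||Nnew)$.

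Next I would treat the physical-attack case, which is where the timing assumption does the real work. Because a physical attack keeps $\rho$ offline for at least $T_{adv}\ge T_{att}$, the device necessarily misses at least one of the two linked $nonce$ updates performed within the epoch. As the updates are chained through a one-way hash, a single missed update leaves $\rho$ unable to reconstruct the current $nonce$; consequently it can neither authenticate inter-prover traffic keyed by $HASH(K_0||nonce)$ nor derive $K_{ENC}=HASH(K_2, nonce)$ to decrypt $Attest_{req}$. Hence a returning physically-compromised device cannot set its bit nor emit a valid report, and is correctly flagged. The complementary worry — that $adv$ forges $\upsilon$'s broadcasts to feed a stale $nonce$ back to the victim — is blocked by the delayed-disclosure authentication: before a session key $K_i$ is released no party can MAC under it, and after release the time-schedule check rejects the packet, so any such forgery reduces once more to the one-wayness of $F$.

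The main obstacle I anticipate is the collusion bound $c=n-1$. With all but one device compromised, $adv$ holds nearly every $K_a$, $K_t$, $K_c$ and every already-disclosed session key, so the reduction must show that this leaked material still does not let $adv$ fabricate the aggregate or cancel a victim's contribution. The crux is therefore the security of the XOR aggregation under partial key exposure: I would isolate the single uncompromised, $\upsilon$-reachable device guaranteed by the model, argue that its fresh $nonce$ contribution — propagated only through the key-disclosure asymmetry — keeps at least one unpredictable term in every quantity $adv$ must forge, and then push the remaining work onto the selective-forgery resistance assumed of the aggregation scheme.
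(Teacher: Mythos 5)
Your decomposition into remote and physical cases matches the paper's structure, and your reduction framing is if anything more careful than the paper's informal sketch. But you are missing the third case the paper treats explicitly: the \emph{sophisticated} adversary who eavesdrops on the broadcast channel while simultaneously extracting a captured device's secrets. Your physical-case argument hinges on the claim that ``a single missed update leaves $\rho$ unable to reconstruct the current $nonce$'' because the updates are chained through a one-way hash. That is only true in the backward direction. The update is $nonce = HASH(nonce||Nnew)$ with $Nnew$ broadcast in the clear (only its authenticating key is withheld until disclosure), so an adversary who records the broadcasts and extracts any past $nonce$ together with $K_0$ from the captured device can recompute every subsequent $nonce$ offline and re-synchronize. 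Absence detection still flags the captured device in that epoch, but the adversary then holds valid group-wide secrets for all future epochs, which defeats the experiment over the polynomially many interactions $j$ that the theorem quantifies over.

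The paper closes this hole not by a cryptographic reduction but by a protocol step your proposal never invokes: upon detecting a physical compromise, $\upsilon$ re-keys all group-wide secrets ($nonce$ and $K_0$) over channels the adversary cannot read --- the per-cluster keys $K_c$ for healthy clusters and the per-device keys $K_a$ for the surviving members of the compromised cluster. Any proof of the theorem has to cite this refresh step; without it the statement fails against a sophisticated adversary no matter how strong the primitives are. Your final paragraph on the $c=n-1$ collusion bound and XOR aggregation under partial key exposure raises a genuine concern that the paper's proof glosses over entirely, so that part goes beyond the paper rather than against it; but it does not substitute for the missing re-keying argument.
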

\begin{proof}
	
	$adv$ can be either: (i) remote, (ii) physical, or (iii) sophisticated (performs remote and physical attacks simultaneously).
	In the case of a remote $adv$, compromised devices are always detected with a probability 100\% as the attestation routine is atomic and executes in TEE. $adv$ can not bypass the rules enforced by MPU and thus can not learn the stored secrets or break the selective forgery MAC scheme. 
	
	
	Physical attack is a time-consuming activity which requires at least few minutes to retrieve the secrets. Attesting $S$ is guaranteed to be achieved in less time, $T_{att}$. Accordingly, the exposed secrets will be useless for $adv$ as the attestation is performed and the secret $nonce$ is updated. Thus, $adv$ can not decrypt or authenticate any further communications and the compromised device will be detected due to its absence. 
	
	
	In the case of a sophisticated $adv$, she can eavesdrop on the communication and record all exchanged packets while performing a physical attack on one of the devices. The compromised device will be detected by $\upsilon$ as explained above but with the information gained from both attacks, $adv$ is able to synchronize and learn the current secret keys and thus decrypting all future communications and threatening the security of slimIoT if no action is taken. 
	Therefore, in case of detecting a physical attack, the last step of the protocol is securely updating all group-wide secrets in all healthy devices. This means that the probability of $adv$ to get benefits from the exposed group-wide secrets is negligible. 
	
\end{proof}

\section{Implementation}
\label{sec:implementation}

We implemented a practical scenario on a 5-node heterogeneous  mesh network where each node belongs to either the Arduino or the MicroPnP IoT platform \cite{mpnp}.  
The Arduino platform offers an 8-bit AVR ATmega 328p microcontroller running at 16 MHz with 2 KB of SRAM and 32 KB of flash memory, whereas the MicroPnP IoT platform provides an 8-bit AVR ATmega 1284p microcontroller running at 10 MHz with 16 KB of SRAM and 128 KB of flash. Both platforms are equipped with IEEE 802.15.4e Time-Slotted Channel Hopping (TSCH) \cite{802.15.4e} radio transceiver for wireless communication.

We used SHA-256 as a one-way hash function for generating keychains, and as a keyed-hash message authentication code (HMAC) to measure the software integrity during the attestation phase. We employed AES-128 in counter mode (CTR) as an authentication encryption scheme.  



\section{Evaluation}
\label{sec:evaluation}
In this section, we first present an analytical model for cluster selection where the communication cost depends on the tolerated latency of detecting remotely-compromised devices. Then, we illustrate our experimental results in terms of performance, scalability and robustness.  
\subsection{Communication Costs versus Latency of Detecting Attacks}
\label{subsec:model}

Let $G=(N,L)$ represents a graph for a mesh network where $N$ is the number of IoT devices (vertices) and $L$ is any number of links (edges) that maintains the connectivity between devices. Considering that the network is divided into $M$ non-overlapping clusters, let $C_{i}=\{n_1,n_2,...\} \subseteq N$ be a set of devices representing the $i^{th}$ cluster such that $\bigcap_{j=1}^M C_{j} = \emptyset $. We model the attestation response at iteration $t$ of the $i^{th}$ cluster selected for attestation by 3-tuple $A_{t}^i=\langle C_{i}, S_{t}^i, T_{las} \rangle $, where $S_{t}^i \in \{0,1\}$ as $0$ value indicates that the corresponding cluster has at least one compromised device in this iteration, whereas $1$ value reflects a safe status. $T_{las}$ is the last time when the cluster has been attested (represented in minutes).

The cluster selection process aims to minimize the communication overhead by attesting  various parts of the network at different time periods while maintaining the security level. However, excluding some clusters from attestation  at iteration $t$ (in spite of its shortness) may increase the latency of identifying the remotely-compromised IoT devices in these clusters. Thus, we deal with the communication overhead issue as a trade-off between the cluster selection and the latency of identifying compromised devices, where we leverage the history of every cluster (i.e., $A_{t-1}^i, A_{t-2}^i$, etc. ). More formally, we handle the selection of clusters as a constrained minimization optimization problem objectively defined as: \vspace{-8pt} 

\begin{equation}
\resizebox{1\hsize}{!}{$
\begin{aligned}
& \underset{\alpha_1,\ldots,\alpha_M}{\text{minimize}}\ \sum_{j=1}^{M}  \alpha_j*P(S_{t}^j=1)  
\ \ \text{s.t. }\
\frac{\sum_{i=1}^{M} \alpha_{i} * |C_{i}|  }{|N|} \geq Tr_{cov},\\
& Max\{ (1-\alpha_i)*(T_{next}-A_{t-1}^i.T_{las}) :\ i\in\{1,\dots,M\}\} \leq T_{Max}
\end{aligned} $}
\end{equation} 
where $\alpha_{\bullet}\in \{0,1\}$ is a free variable corresponding to a cluster, $P(S_{t}^j=1)$ represents the probability of the $j^{th}$ cluster being successfully attested, $Tr_{cov}\in [0,1] $ is a threshold specifying the ratio of devices that will be attested, $T_{next}$ is the next attestation time in minutes, and $T_{Max}\in \mathbb{Z}^+$ is the maximum number of minutes that each cluster can be excluded from the attestation process. The first constraint ($Tr_{cov}$) in the objective function is application dependent and can be tied to the time period of attesting against physical attacks, whereas the second constraint ($T_{Max}$) is added to ensure that all clusters have been attested in a pre-defined maximum time window. In the computation of the probability component stated in the objective function, we first assume that the current state of the cluster is independent from the previous states (i.e. $S_{t-1}^j$). Thus,  we adopt the Maximum Likelihood Estimation (MLE) method \cite{mle} through leveraging the given information history ($\{ A_{1}^j,\dots, A_{t-1}^j \}$) about the state of the cluster at various attestation iterations. More precisely, we first count the number of times that the cluster has been successfully attested as reported in the cluster's history. Then, the resulted number is divided by the number of times that the cluster has been considered in the attestation process. Formally, the probability component for the $j^{th}$ cluster is computed as follows: 
\begin{equation}
P(S_{t}^j=1)=\frac{| \{  A_{i}^j: i\in\{1,...,t-1\},  S_{i}^j=1\}|}{| \{  A_{i}^j: i\in\{1,...,t-1\} \}|}
\end{equation}
where $|\bullet|$ represents the set cardinality (length).  

Selecting values of both constraints ($T_{Max}$ and $Tr_{cov}$) is application-dependent. However, the experimental results in the following section are based on  setting the value of $T_{Max}$  to 60 minutes and the value of $Tr_{cov}$  to be either 0.25 or 1, where 1 shows the worst case in the communication overhead in terms of runtime as all clusters are selected in the attestation. 

\begin{table}[b]
	\caption{Cryptographic Runtime Measurements}	
	\normalsize
	\newcommand{\hd}[1]{\multicolumn{1}{c}{\textbf{#1}}}
	\newcommand{\hdd}[1]{\multicolumn{1}{c}{\multirow{2}{*}{\textbf{#1}}}}
	\resizebox{\columnwidth}{!}{%
		\begin{tabular}{lrrr}
			\toprule
			\hdd{Runtime Measurements} & \hd{Arduino} & \hd{MicroPnP}         \\
			& \hd{ATmega328P} &\hd{ATmega1284P}\\
			\midrule
			Session Key Authentication            & 3.213 ms       & 5.145 ms  \\
			Updating $Nonce$ value            & 6.34 ms       & 10.10 ms  \\
			Deriving $K_{ENC}$, authenticating, and decrypting AttestRequest (64 Bytes)          & 47.38 ms       & 75.9 ms  \\
			preparing and aggregating two attestation reports            & 3.61 ms       & 5.184 ms  \\ Oring two vectors of length 255 Bytes (2000 devices)		& 0.449 ms		& 0.648 ms \\
			Verification of 32 Bytes MAC (for 64 Bytes message)		& 12.7 ms		& 20.36 ms \\
			
			\bottomrule \\
		\end{tabular}
	}
	
	\label{tab:crm}
	\vspace{-20pt}
\end{table}

\subsection{Experimental Results}

\textbf{Runtime Overhead.}
Computing the HMAC-SHA2 digest of MCU with 128 KB flash memory in MicroPnP platform consumed 9.68 seconds, whereas it demanded 1.47 seconds for MCU with 32 KB flash memory in Arduino. Table \ref{tab:crm} presents  the precise time consumed by the various cryptographic operations in slimIoT for both Arduino and MicroPnP platforms. 

\textbf{Network Runtime Measurements.} We measured an average propagation delay between any two neighboring nodes of 17 ms. Moreover, the maximum throughput we measured at the application layer is 56 kbps. However, the maximum throughput of IEEE 802.15.4e in theory is 250 kbps. 

\begin{figure*}[t!]
	\centering
	\begin{subfigure}[b]{0.5\textwidth}
		
		\includegraphics[width= \textwidth]{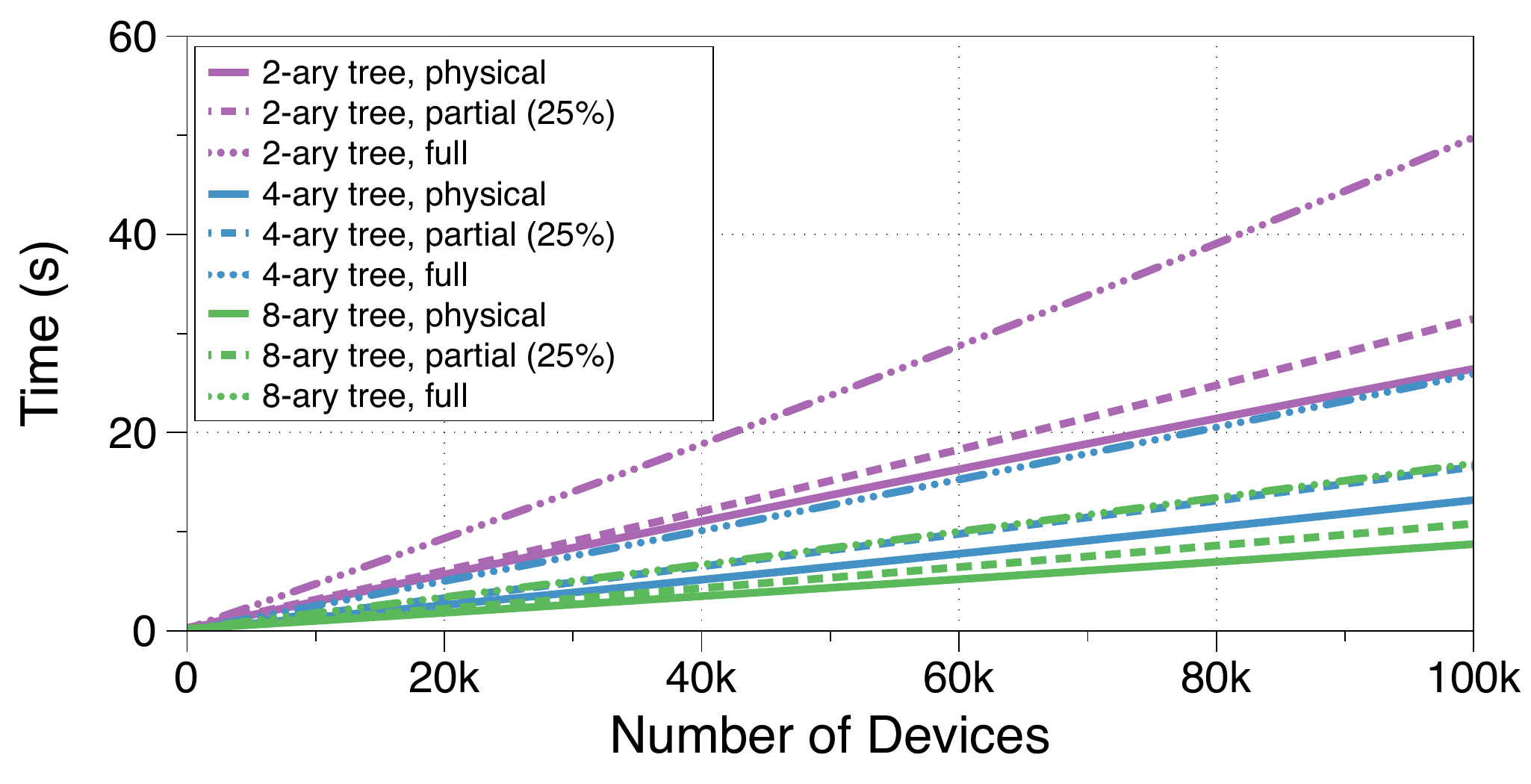}
		\caption{Runtime of various static topologies}
		\label{fig:trees}
	\end{subfigure}
	\begin{subfigure}[b]{0.49\textwidth}
		
		\includegraphics[width= \textwidth]{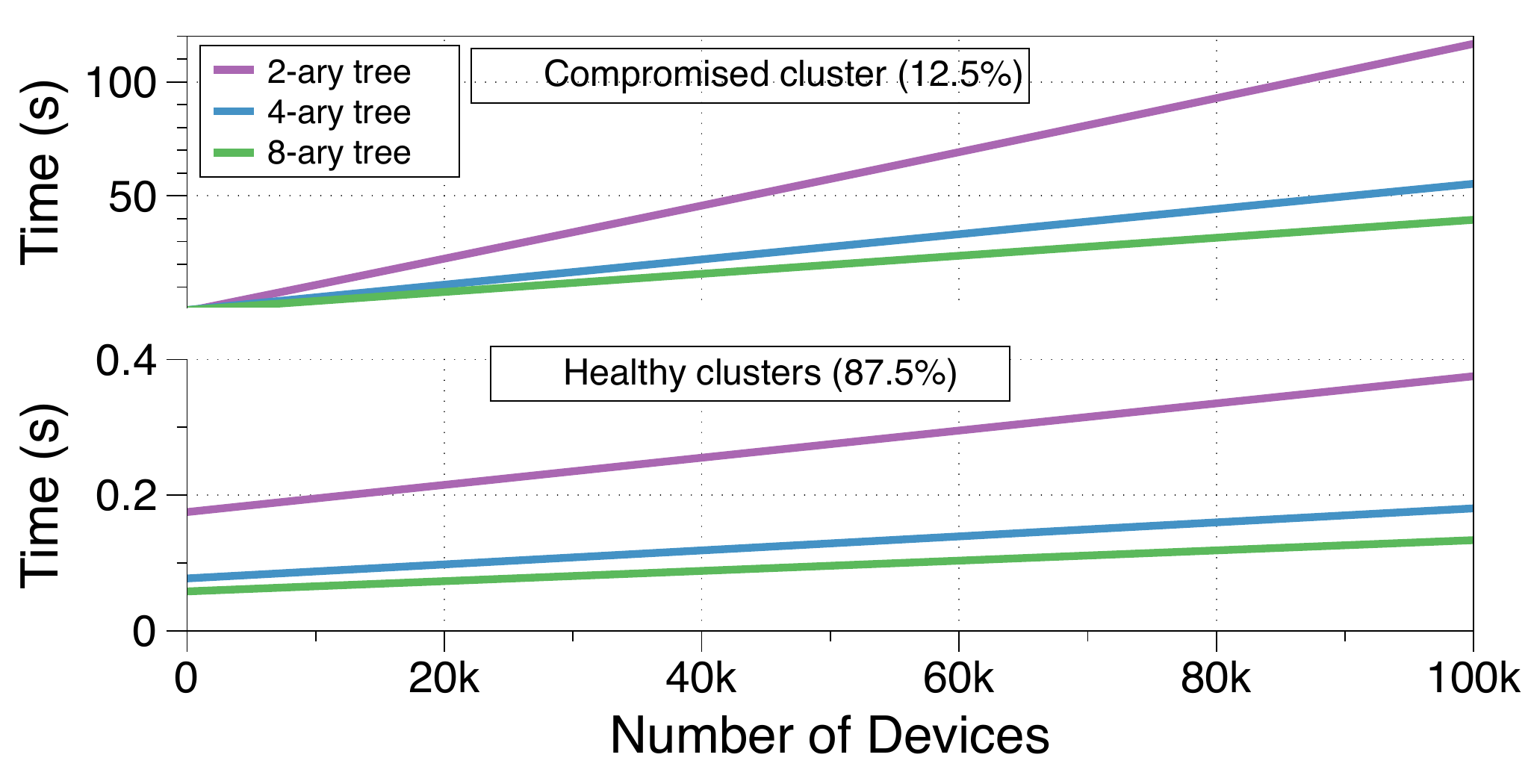}
		\caption{Runtime of updating secrets} 
		\label{fig:updates}
	\end{subfigure}
	\begin{subfigure}[b]{0.5\textwidth}
		
		\includegraphics[width= \textwidth]{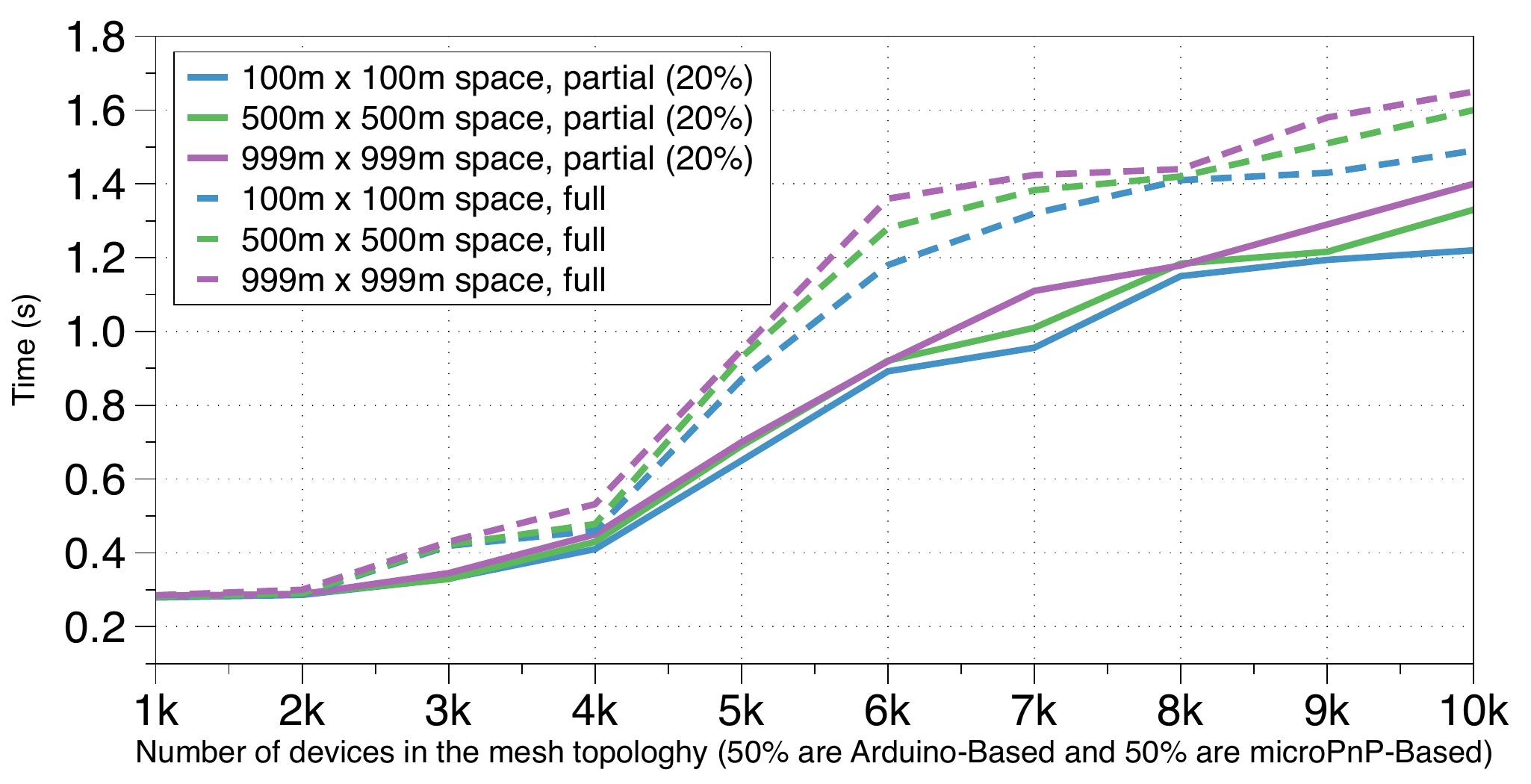}
		\caption{Runtime of static mesh networks}
		\label{fig:mesh}
	\end{subfigure}
	\begin{subfigure}[b]{0.49\textwidth}
		
		\includegraphics[width= \textwidth]{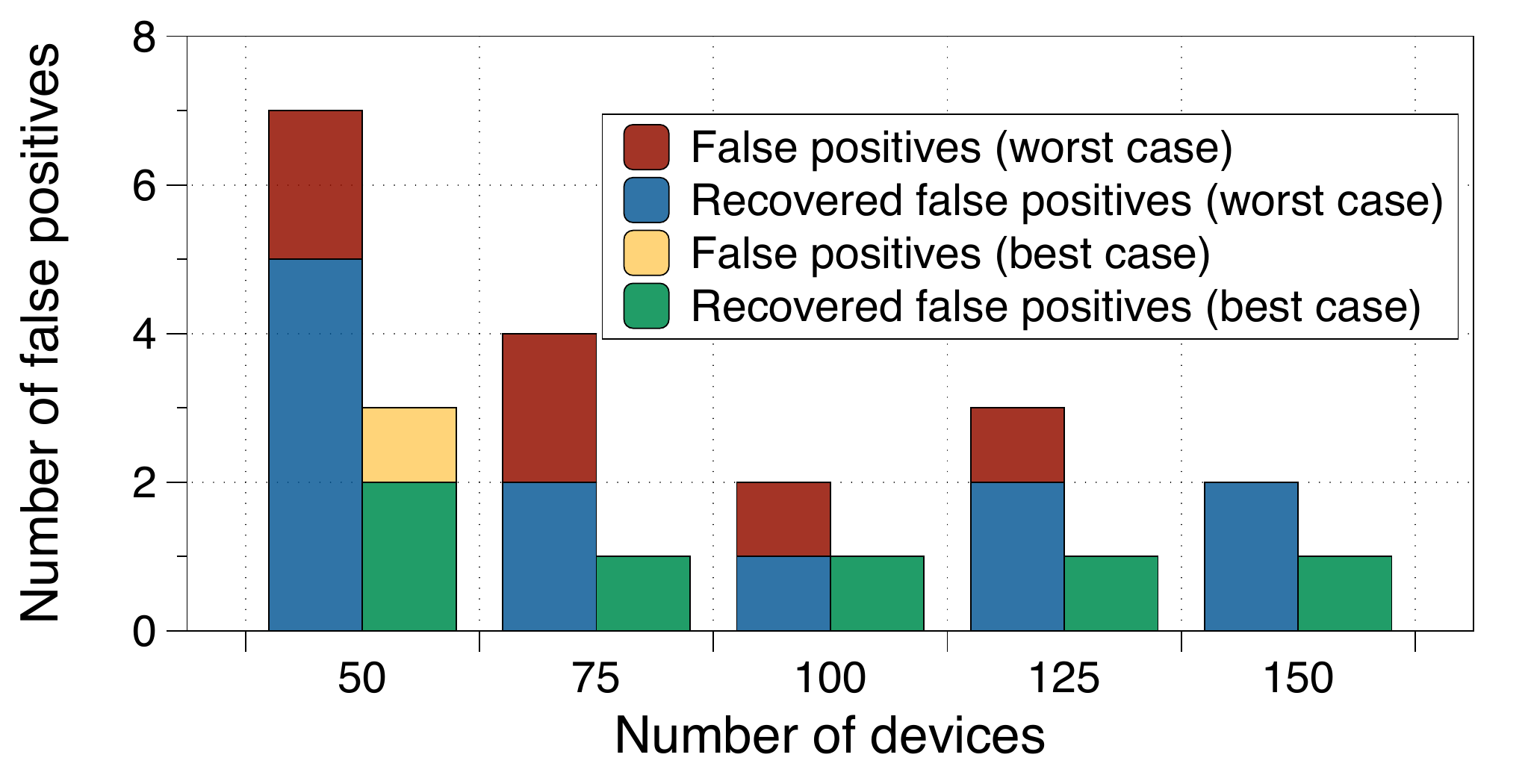}
		\caption{The robustness to false positives} 
		\label{fig:fb}
	\end{subfigure}
	\caption{Evaluation of slimIoT properties}
	\label{figureapp1}
\end{figure*}

\textbf{Memory Footprint.}
According to \cref{init}, each $\rho$ stores (i) its own ID along with the parent and cluster IDs (each 3 bytes), (ii) two private keys ($K_a$ \& $K_t$) and one cluster-wide key ($K_c$) (each 16 bytes), and (iii) a group-wide commitment value ($K_0$), a secret $nonce$, and the true MAC digest of memory ($H_S$) (each 32 bytes). In total, each $\rho$ requires 153 bytes of permanent storage for data in addition to another 64 bytes for storing $H'_S$ and the last authenticated session key. This value is constant and minimal comparing to all other existing approaches. More precisely, assuming that the minimum amount of storage is $c$, each $\rho$ in slimIoT requires only $c$, whereas in SCAPI, each $\rho$ requires $c$ + $l*g$, where $l$ is the length of the channel key and $g$ is the number of neighbours of $\rho$.


\textbf{Scalability.}
 OMNeT++ framework \cite{omnet} is used to simulate large networks with different configurations. The computational and network delays are adjusted according to the experimentally measured values illustrated previously. We set the value of the disclosure delay $d$ to 30 ms.

First, we simulate slimIoT in various static and homogeneous network topologies with a large number of connected IoT devices. 
Figure \ref{fig:trees} shows the total runtime of slimIoT with different configurations for binary, 4-ary, and 8-ary tree topologies consisting of Arduino-based nodes. 
Considering a swarm of 100K devices equally distributed over 8 clusters, attesting the entire swarm against physical attacks in addition to verifying the software integrity of 2 clusters (25\% of the devices) only against remote attacks consumes around 30 seconds in the binary tree, whereas it demands less than 17 seconds in a 4-ary tree and 9 seconds in an 8-ary tree. Performing full attestation of the same network consumes  at most 18 seconds in an 8-ary tree and not more than 50 seconds in a binary tree. 

\textbf{Physical Attack Detection.} Considering the aforementioned scenario in addition to $\upsilon$ of type raspberry pi 3,  identifying physically-compromised devices in  an n-bit vector of length 100K demands a few milliseconds whereas it consumes 6.3 seconds to verify the aggregated attestation reports from 2 clusters (25\%). 
Assuming that $\upsilon$ detected a compromised device in an 8-cluster mesh network, Figure \ref{fig:updates} shows the time consumed to update the group-wide secrets for different network topologies. In a binary tree with 100K devices (the worst case), updating healthy clusters (7 out of 8) consumes at most 0.4 second, whereas it would last up to 2 minutes to update the last healthy device in the compromised cluster.

\textbf{Heterogenous Mesh Networks.} We simulated three static heterogenous mesh networks with different configurations and number of devices. Figure \ref{fig:mesh} shows the runtime overhead incurred by running either partial (20\%) or full attestation in each mesh topology with up to 10K devices. For each mesh topology, we set diverse values for the size of the covered geographical area as well as the communication range of devices (20 m, 40 m, or 60 m). 
%

\textbf{Robustness.}
\label{rob-eva}
Sometimes,  factors such as device mobility and network delay prevent operative devices in the swarm from receiving some updates. This causes false positive cases where a healthy device is mistakenly considered as physically-compromised. 
In slimIoT, If the lost packet carries the secret session key that has to be exposed, the device can still get and verify this key upon receiving the next secret key. For example, assuming that there is a device $D_i$  received  a $Nonce_{update}$ packet, $P1$, in time interval $1$, and another $Attest_{req}$ packet, $P2$, in time interval $2$. So far, $D_i$ cannot authenticate any packet yet. Lets assume that the packet that discloses key $K_1$ is lost. So, $D_i$ still can not authenticate $P1$. In the next disclosure period, $\upsilon$ exposes $K_2$. Upon receiving this key, $D_i$ is able to authenticate both packets, by verifying $K_0$ = $F(F(K_2))$, and learning $K_1$ = $F(K_2)$.
This important feature is not offered by SCAPI\footnote{The authors of SCAPI offer an extension of it to minimize false positive cases. However, this extension is still expensive in terms of communication overhead as it involves changing the leader device with each false positive case and advertising about this change to the entire swarm.} where missing one packet by any $D_i$ renders it directly compromised.  

To evaluate the robustness of slimIoT to false positives in highly dynamic networks, we randomly deployed various small sets of devices in 999 m $\times$ 999 m square area. In each set, we set 20\% of the devices as stationary and let the remaining ones move randomly at speed of 10 m/s. We adjusted the communication range of all devices to 50 m. We then run slimIoT 100 times in each group to record the worst and best case (excluding the optimal case) in terms of number of false positives detected and number of recovered ones. Figure \ref{fig:fb} shows the simulation results. For example, in a network with 50 devices, the worst case shows that there were likely 7 devices to appear as false positives since they were out of range of the connected network and accordingly missed the first session secret key required to authenticate the $Nonce_{update}$ request. Five of which were able to receive the second session secret key and thus authenticate both $Nonce_{update}$ and $Attest_{req}$. Therefore, the worst case ended up with 2 false positives instead of 7. For the same network, the best case shows that 2 out of 3 devices are recovered and so forth 1 false positive occurred. It is clear that the more dense the network, the less false positives likely to appear due to the high probability of recovery.

\section{Conclusion}
\label{sec:conclusions}

This paper introduced slimIoT, an informative lightweight scalable attestation scheme that is compatible with all classes of IoT devices. slimIoT provides strong security guarantees with the presence of a physical attacker. The overhead incurred by slimIoT on the most resource-constrained devices in terms of memory footprint and communication overhead is very reasonable, while simulation results demonstrated that it is scalable to large mesh networks consisting of thousands of  heterogenous IoT devices. Furthermore, it is robust in dynamic networks where false positive cases are unlikely to occur due to the possibility of devices to recover and synchronize with the ongoing attestation depending on the keychain property. 

\section*{Acknowledgment}

This research is supported by the research fund of KU Leuven and IMEC, a research institute founded by the Flemish government.

\bibliographystyle{ieeetr}

\bibliography{references}

\begin{thebibliography}{10}

\bibitem{stuxnet}
R.~Langner, ``Stuxnet: Dissecting a cyberwarfare weapon,'' {\em IEEE Security
  \& Privacy}, vol.~9, no.~3, pp.~49--51, 2011.

\bibitem{h2010bootstrapping}
B.~Parno, J.~M. McCune, and A.~Perrig, ``Bootstrapping trust in commodity
  computers,'' in {\em Security and privacy (SP), 2010 IEEE symposium on},
  pp.~414--429, IEEE, 2010.

\bibitem{spab}
Y.~Li, J.~M. McCune, and A.~Perrig, ``{SBAP: Software-Based Attestation for
  Peripherals},'' in {\em Proceedings of the 3rd International Conference on
  Trust and Trustworthy Computing}, (Berlin, Heidelberg), 2010.

\bibitem{smart}
K.~Eldefrawy, G.~Tsudik, A.~Francillon, and D.~Perito, ``{SMART: Secure and
  Minimal Architecture for (Establishing Dynamic) Root of Trust.},'' in {\em
  19th NDSS Symposium}, The Internet Society, 2012.

\bibitem{smarm}
X.~Carpent, N.~Rattanavipanon, and G.~Tsudik, ``Remote attestation of iot
  devices via smarm: Shuffled measurements against roving malware,'' in {\em
  IEEE International Symposium on Hardware Oriented Security and Trust (HOST)},
  2018.

\bibitem{seda2015}
N.~Asokan, F.~Brasser, A.~Ibrahim, A.-R. Sadeghi, M.~Schunter, G.~Tsudik, and
  C.~Wachsmann, ``Seda: Scalable embedded device attestation,'' in {\em
  Proceedings of the 22nd ACM SIGSAC Conference on Computer and Communications
  Security}, pp.~964--975, ACM, 2015.

\bibitem{sana2016}
M.~Ambrosin, M.~Conti, A.~Ibrahim, G.~Neven, A.-R. Sadeghi, and M.~Schunter,
  ``Sana: secure and scalable aggregate network attestation,'' in {\em
  Proceedings of the 2016 ACM SIGSAC Conference on Computer and Communications
  Security}, pp.~731--742, ACM, 2016.

\bibitem{darpa2016}
A.~Ibrahim, A.-R. Sadeghi, G.~Tsudik, and S.~Zeitouni, ``Darpa: Device
  attestation resilient to physical attacks,'' in {\em Proceedings of the 9th
  ACM WiSec Conference}, pp.~171--182, ACM, 2016.

\bibitem{lisa2017}
X.~Carpent, K.~ElDefrawy, N.~Rattanavipanon, and G.~Tsudik, ``Lightweight swarm
  attestation: a tale of two lisa-s,'' in {\em Proceedings of the 2017 ACM on
  Asia Conference on Computer and Communications Security}, pp.~86--100, ACM,
  2017.

\bibitem{scapi2017}
F.~Kohnh{\"a}user, N.~B{\"u}scher, S.~Gabmeyer, and S.~Katzenbeisser, ``Scapi:
  a scalable attestation protocol to detect software and physical attacks,'' in
  {\em Proceedings of the 10th ACM Conference on Security and Privacy in
  Wireless and Mobile Networks}, pp.~75--86, ACM, 2017.

\bibitem{salad}
F.~Kohnh{\"a}user, N.~B{\"u}scher, and S.~Katzenbeisser, ``Salad: Secure and
  lightweight attestation of highly dynamic and disruptive networks,'' in {\em
  Proceedings of the 2018 on Asia Conference on Computer and Communications
  Security}, pp.~329--342, ACM, 2018.

\bibitem{wise}
M.~Ammar, M.~Washha, and B.~Crispo, ``Wise: Lightweight intelligent swarm
  attestation scheme for iot (the verifier's perspective),'' in {\em
  Proceedings of the 14th International Conference on Wireless and Mobile
  Computing, Networking and Communications (WiMob)}, IEEE, 2018.

\bibitem{RFC7228}
C.~Bormann, M.~Ersue, and A.~Keranen, ``{Terminology for Constrained-Node
  Networks},'' RFC 7228, IETF, may 2014.

\bibitem{abdec2008}
M.~Conti, R.~Di~Pietro, L.~V. Mancini, and A.~Mei, ``Emergent properties:
  detection of the node-capture attack in mobile wireless sensor networks,'' in
  {\em Proceedings of the first ACM conference on Wireless network security},
  pp.~214--219, ACM, 2008.

\bibitem{spins}
A.~Perrig, R.~Szewczyk, J.~D. Tygar, V.~Wen, and D.~E. Culler, ``Spins:
  Security protocols for sensor networks,'' {\em Wireless networks}, vol.~8,
  no.~5, pp.~521--534, 2002.

\bibitem{omnet}
A.~Varga, ``The omnet++ discrete event simulation system (http://www. omnetpp.
  org). european simulation multiconference (esm2001), prague,'' {\em Czech
  Republic}, 2001.

\bibitem{min2014}
A.~Francillon, Q.~Nguyen, K.~B. Rasmussen, and G.~Tsudik, ``A minimalist
  approach to remote attestation,'' in {\em Proceedings of the conference on
  Design, Automation \& Test in Europe}, p.~244, 2014.

\bibitem{smv}
W.~Daniels, D.~Hughes, M.~Ammar, B.~Crispo, N.~Matthys, and W.~Joosen, ``S
  $\mu$ v-the security microvisor: a virtualisation-based security middleware
  for the internet of things,'' in {\em Proceedings of the 18th ACM/IFIP/USENIX
  Middleware Conference: Industrial Track}, 2017.

\bibitem{mac2008}
J.~Katz and A.~Y. Lindell, ``Aggregate message authentication codes,'' in {\em
  Topics in Cryptology--CT-RSA 2008}, pp.~155--169, Springer, 2008.

\bibitem{speed}
M.~Ammar, W.~Daniels, B.~Crispo, and D.~Hughes, ``Speed: Secure provable
  erasure for class-1 iot devices,'' in {\em Proceedings of the Eighth ACM
  Conference on Data and Application Security and Privacy}, 2018.

\bibitem{mpnp}
N.~Matthys, F.~Yang, W.~Daniels, W.~Joosen, and D.~Hughes, ``Demonstration of
  micropnp: the zero-configuration wireless sensing and actuation platform,''
  in {\em Sensing, Communication, and Networking (SECON), 2016 13th Annual IEEE
  International Conference on}, IEEE, 2016.

\bibitem{802.15.4e}
T.~Watteyne, M.~Palattella, and L.~Grieco, ``Using ieee 802.15. 4e time-slotted
  channel hopping (tsch) in the internet of things (iot): Problem statement,''
  tech. rep., 2015.

\bibitem{mle}
F.~Scholz, ``Maximum likelihood estimation,'' {\em Encyclopedia of statistical
  sciences}, 1985.

\end{thebibliography}

\end{document}